\documentclass[11pt]{article}

\usepackage{amsmath}
\usepackage{amsfonts}
\usepackage{amsthm}
\usepackage{amssymb}
\usepackage[utf8]{inputenc}

\usepackage{textcomp}

\usepackage[margin=1in]{geometry}
\usepackage{indentfirst}

\usepackage{authblk}
\usepackage{abstract}

\usepackage{pbox}
\usepackage{comment}

\usepackage[linesnumbered, boxruled]{algorithm2e}
\setlength{\algomargin}{2em}

\usepackage{pgfplots}
\pgfplotsset{compat=1.4}
\usetikzlibrary{shapes}
\usetikzlibrary{plotmarks}

\usepackage{framed}
\usepackage{mdframed}
\usepackage{placeins}
\usepackage{subcaption}
\usepackage{afterpage}
\usepackage{hyperref}
\usepackage{courier}
\usepackage{multirow}
\usepackage[noadjust]{cite}

\def\ruzem{Ruzsa-Szemer{\' e}di}
\newcommand{\Szemeredi}{Szemer{\' e}di}
\newcommand{\eps}{\varepsilon}

\newcommand{\bee}{\mathcal{B}}

\newcommand{\rr}{\mathbb{R}}
\DeclareMathOperator{\dist}{dist}
\DeclareMathOperator{\rs}{\mbox{\tt RS}}

\DeclareMathOperator{\diam}{diam}

\newtheorem{theorem}{Theorem}

\newtheorem{lemma}{Lemma}
\newtheorem{definition}{Definition}

\newtheorem{oq}{Open Question}

\newtheorem*{definition*}{Definition}
\newtheorem*{theorem*}{Theorem}

\newtheorem*{lemma*}{Lemma}

\newcommand{\specialcell}[2][c]{%
  \begin{tabular}[#1]{@{}c@{}}#2\end{tabular}}

\title{New Results on Linear Size Distance Preservers\footnote{A preliminary version of this paper appeared in the conference proceedings of SODA 2017, under the title ``Linear Size Distance Preservers."}}
\author{Greg Bodwin\thanks{bodwin@umich.edu} }
\affil{University of Michigan EECS}
\date{}

\begin{document}



\maketitle
\thispagestyle{empty}

\begin{abstract}
Given $p$ node pairs in an $n$-node graph, a \emph{distance preserver} is a sparse subgraph that agrees with the original graph on all of the given pairwise distances.
We prove the following bounds on the number of edges needed for a distance preserver:
\begin{enumerate}
\item Any $p$ node pairs in a directed weighted graph have a distance preserver on $O(n + n^{2/3} p)$ edges.

\item Any $p = \Omega\left(\frac{n^2}{\rs(n)}\right)$ node pairs in an undirected unweighted graph have a distance preserver on $O(p)$ edges, where $\rs(n)$ is the \ruzem\ function from combinatorial graph theory.

\item As a lower bound, there are examples where one needs $\omega(\sigma^2)$ edges to preserve all pairwise distances within a subset of $\sigma = o(n^{2/3})$ nodes in an undirected weighted graph.
If we additionally require that the graph is unweighted, then the range of this lower bound falls slightly to $\sigma \le n^{2/3 - o(1)}$.
\end{enumerate}
\end{abstract}
\pagenumbering{arabic}
\setcounter{page}{0}

\pagebreak

\section{Introduction}

We study \emph{distance preservers}, a fundamental graph theoretic tool that has been applied in research on distance oracles \cite{EP16}, spanners \cite{BV15, BV16, Pettie09, EFN17, BCE05, ABP17, AB16soda, AB17jacm}, hopsets \cite{ABP17, HP18ipl, HP18swat}, shortcutting \cite{HP18swat, ABP17}, and graph algorithms \cite{Alon02, CGMW18}.
Distance preservers have also been a recently popular object of study in their own right \cite{BCE05, CE06, BV16, Bodwin19, GR17, CGMW18}.
Distance preservers were first studied by Coppersmith and Elkin \cite{CE06}.

\begin{definition} [Distance Preservers \cite{CE06}]
For a graph $G = (V, E)$ and a set $P \subseteq V \times V$ called ``demand pairs,'' a subgraph $H$ is a {\em distance preserver} of $G, P$ if $\dist_G(s, t) = \dist_H(s, t)$ for all $(s, t) \in P$.
When $P = S \times S$ for some $S \subseteq V$, we say that $H$ is a \emph{subset distance preserver} of $G, S$.
\end{definition}

Shortest path trees are the most well-known example of distance preservers: for an $n$-node graph, when $P = \{s\} \times V$, there is always a distance preserver that is a tree and thus has $O(n)$ edges.
Thus, in some sense, the goal of research on distance preservers is to prove versions of this fact without the structural requirement $P = \{s\} \times V$.
There are two simple lower bounds for distance preservers to keep in mind: 
\begin{enumerate}
\item When $G$ is a path and we have a single demand pair at either end of the path, $G$ is the only distance preserver of itself, so $n-1$ edges are required.

\item When $G$ is an unweighted clique and we have any $p$ demand pairs, any distance preserver must keep the $p$ edges corresponding to the endpoints of the demand pairs.
\end{enumerate}

In this work we prove new upper and lower bounds on the extremal number of edges needed for a distance preserver of $p$ node pairs in an $n$-node graph.
Our results imply significant progress in understanding the parameter ranges where the above simple lower bounds are asymptotically tight, and \emph{linear size distance preservers} on $O(n+p)$ edges are available in general, so this will be a conceptual theme.
We also give some open problems that fit within this theme.

\subsection{Prior Work}

Prior work on distance preservers for general input graphs can be found in \cite{CE06, BV16}.
These papers proved bounds on the number of edges needed for a distance preserver, which are listed in Table \ref{fig:history}.
In particular, Coppersmith and Elkin \cite{CE06} proved the following facts about distance preservers of linear size:
\begin{itemize}
\item For any $p = O(n^{1/2})$ demand pairs in an $n$-node undirected weighted graph, there is a distance preserver on $O(n)$ edges.
Moreover, the parameter range $p = O(n^{1/2})$ cannot be improved, even if attention is restricted to unweighted graphs \emph{and} we assume the structure $P = S \times S$.

\item In $n$-node undirected weighted graphs, there are examples where $p$ demand pairs require $\omega(p)$ edges for a distance preserver, for any $p = o(n^2)$.
If we additionally require the graph to be unweighted, the range of this lower bound falls slightly to $p = O(n^{2 - o(1)})$.

\item In $n$-node undirected weighted graphs, there are examples where $\sigma$ nodes require $\omega(\sigma^2)$ edges for a subset distance preserver, for any $\sigma = o(n^{3/5})$.
If we additionally require the graph to be unweighted, the range of this lower bound falls to $\sigma = o(n^{9/16})$.
\end{itemize}

\begin{table}[t]
\begin{center}
\begin{tabular}{|c|c|c|c|}
\cline{3-4}
\multicolumn{1}{l}{} && \textbf{Upper Bound} & \textbf{Lower Bound} \\
\hline
\multirow{4}{*}{\textbf{Unweighted}} & \textbf{Undirected} & \specialcell{$O\left(n + n^{1/2}p\right)$\\and $O\left(np^{1/3} + n^{2/3}p^{2/3}\right)$} & \multirow{3}{*}{\specialcell{$\Omega\left(n^{2d / (d^2 + 1)} p^{d(d-1)/(d^2 + 1)}\right)$\\ for any int.\ $d \ge 2$, and\\$\Omega\left(n^{10/11}\sigma^{4/11} + n^{9/11}\sigma^{6/11}\right)$}} \\
\cline{2-3}
& \multirow{2}{*}{\textbf{Directed}}  & \multirow{2}{*}{$O\left(np^{1/2}\right)$} &  \\
& & & \\
\hline

\multirow{2}{*}{\textbf{Weighted}} & \textbf{Undirected} & \specialcell{$O\left(n + n^{1/2}p\right)$\\and $O\left(np^{1/2}\right)$} & \multirow{2}{*}{\specialcell{$\Omega\left(n^{2/3} p^{2/3}\right)$, and \\ $\Omega\left(n^{6/7} \sigma^{4/7}\right)$}} \\
\cline{2-3}
& \textbf{Directed}  & $O\left(np^{1/2}\right)$ & \\
\hline
\end{tabular}
\end{center}
\caption{\label{fig:history} State of the art upper and lower bounds prior to this paper for pairwise distance preservers of $|P| = p$ demand pairs, or for subset distance preservers of $|S| = \sigma$ nodes, in an $n$-node graph.  The second upper bound for undirected unweighted graphs is due to Bodwin and Vassilevska Williams \cite{BV16}, and the remaining bounds are all due to Coppersmith and Elkin \cite{CE06}.}
\end{table}

Variants of distance preservers were studied in \cite{BCE05, CGMW18}, and distance preservers for particular classes of input graphs were studied in \cite{GR17, CGMW18}.
There are also related results for \emph{pairwise spanners}, in which distances are preserved with error \cite{CGK13, KV15, Kavitha17, Woodruff06, Parter14, AB16soda, AB17jacm}, and for \emph{fault-tolerant distance preservers} in which the preserving subgraph must be robust to ``failures'' of a few nodes or edges \cite{PP13, Parter15, BGPV17, GK17}.

\subsection{Our Results}

Our first new result is in the setting of directed weighted graphs:

\begin{theorem} \label{thm:dwintro}
For any $n$-node directed weighted graph and set of $p$ demand pairs, there is a distance preserver on $O(n + n^{2/3} p)$ edges.
\end{theorem}

Theorem \ref{thm:dwintro} is equivalent to its ``linear size'' special case that $p = O(n^{1/3})$ demand pairs have a distance preserver on $O(n)$ edges: given a larger set of demand pairs $P$, one can split $P$ into parts $\{P_i\}$ of size $O(n^{1/3})$ each, building a preserver on $O(n)$ edges for each part $P_i$ separately, and then get the bound in Theorem \ref{thm:dwintro} by unioning these partial preservers together.
We remark that the range where $O(n)$-size preservers exist generally is still open:
\begin{oq}
What is the largest $p = p(n)$ so that any $p$ node pairs in an $n$-node directed graph have a distance preserver on $O(n)$ edges?
By Theorem \ref{thm:dwintro} and prior work of Coppersmith and Elkin \cite{CE06}, the answer is in the range $\Omega(n^{1/3}) = p = O(n^{1/2})$.
The answer to this question might differ between the settings of weighted vs.\ unweighted graphs.
\end{oq}

Our second result is in the setting of undirected unweighted graphs:

\begin{theorem} \label{thm:introuu}
For any $n$-node undirected unweighted graph and set of $p$ demand pairs, there is a distance preserver on $O(p + n^2 / \rs(n))$ edges.
\end{theorem}

An equivalent phrasing of this theorem is that, for any $p = \Omega(n^2 / \rs(n))$, any $p$ demand pairs in an $n$-node undirected unweighted graph have a linear size distance preserver on $O(p)$ edges.
Here $\rs(n)$ is the \ruzem\ function from combinatorial graph theory, defined as follows.

\begin{definition} [Induced Matching]
Given a graph $G = (V, E)$, a set of edges $E' \subseteq E$ is an {\em induced matching} if $E'$ is a matching, and there is $S \subseteq V$ such that $E'$ is exactly the edge set of the subgraph induced on $S$.
\end{definition}

\begin{lemma} [Induced Matching Lemma \cite{RS78}]
If $G = (V, E)$ is an $n$-node graph and $E$ can be partitioned into $n$ induced matchings, then $|E| = o(n^2)$.
\end{lemma}

We then define $\rs(n)$ to be the multiplicative deficit from the Induced Matching Lemma; that is, $\rs(n)$ is the smallest value such that every $n$-node graph on $\le n^2 / \rs(n)$ edges can be partitioned into $n$ induced matchings.\footnote{The name $\rs$ comes from the \emph{Ruzsa-\Szemeredi{} Theorem} \cite{RS78}, which is equivalent to the induced matching lemma but often phrased in different language; see \cite{CF13} (introduction and Section 6.2) for some details.}
It remains a major open question to determine the value of $\rs(n)$; the currently-known bounds are
$$2^{\Omega(\log^* n)} = \rs(n) = 2^{O(\sqrt{\log n})}$$
due to Fox \cite{Fox11} and Behrend \cite{Behrend46} (see also \cite{Elkin10}).\footnote{More specifically: the upper bounds on $\rs(n)$ are implied by graphs derived from dense sets of integers without short arithmetic progressions; these latter constructions are the current best ways to construct progression-free sets.}
Theorem \ref{thm:introuu} is conditionally tight in the following sense.
Coppersmith and Elkin \cite{CE06} give examples where $\omega(p)$ edges are needed for a distance preserver of $p = o\left(n^2 2^{-\Theta(\sqrt{\log n})}\right)$ demand pairs in an undirected unweighted graph.
Under the hypothesis that $\rs(n) = 2^{\Theta(\sqrt{\log n})}$, Theorem \ref{thm:introuu} implies that this range of $p$ is tight, up to the constant hidden in the $\Theta$ exponent.
An interesting question left open is whether a similar result can be proved in the setting of directed unweighted graphs:
\begin{oq}
Is there any $p = o(n^2)$ so that every $n$-node directed unweighted graph and set of $p$ demand pairs has a distance preserver on $O(p)$ edges?
\end{oq}

Our last results are lower bounds for subset distance preservers:

\begin{theorem} \label{thm:iswlb}
For any $\sigma = O(n^{2/3})$, there are examples where $|S| = \sigma$ nodes in an $n$-node undirected weighted graph require $\Omega(\sigma n^{2/3})$ edges for a subset distance preserver.
In particular, $\omega(\sigma^2)$ edges are needed when $\sigma = o(n^{2/3})$.
\end{theorem}

\begin{theorem} \label{thm:isulb}
For any integer $2 \le d \le O(\sqrt{\log n})$ and any $\sigma = O(n^{2/3} 2^{-\Theta(\sqrt{\log n})})$, there are examples of $|S|=\sigma$ nodes in an $n$-node undirected unweighted graph where any subset distance preserver has
$$\Omega\left(n^{\frac{2}{d+1}} \sigma^{\frac{(2d+1)(d-1)}{d(d+1)}} 2^{-\Theta(\sqrt{\log n})}\right) \text{ edges.}$$
In particular, $\omega(\sigma^2)$ edges are needed when $\sigma = o\left(n^{2/3} 2^{-\Theta(\sqrt{\log n})}\right)$.
\end{theorem}

While these results polynomially improve the range in which the lower bound is $\omega(\sigma^2)$, this problem still remains open:
\begin{oq}
Is there a constant $c > 0$ so that every $n$-node graph and set of $\sigma = \Omega(n^{1 - c})$ nodes has a subset distance preserver on $O(\sigma^2)$ edges?
\end{oq}
We can neither prove this statement for the setting of undirected unweighted graphs, nor refute it for the general case of directed weighted graphs.
We consider it to be the main open question in the area of distance preservers.

\section{$O(n)$-Sized Preservers for Directed Weighted Graphs}

Here we prove Theorem \ref{thm:dwintro}.
The argument is directly inspired by the proof of $O(n)$-sized distance preservers for undirected graphs by Coppersmith and Elkin \cite{CE06}, and it can be viewed as an adaptation of their method to the directed setting.

\begin{definition} [Shortest Path Tiebreaking Function \cite{BV16}]
In a graph $G$, a {\em shortest path tiebreaking function} is a map $\pi$ from ordered pairs of nodes $(s, t)$ to a shortest $s \leadsto t$ path.
For a set of demand pairs $P$, we will write $\pi(P)$ as a shorthand for the subgraph with edges $\bigcup \limits_{(s, t) \in P} \pi(s, t)$.
\end{definition}

\begin{definition} [Consistency \cite{BV16}]
A tiebreaking function $\pi$ is {\em consistent} if, for all nodes $w, x, y, z \in V$, if $x, y \in \pi(w, z)$ with $x$ before $y$, then $\pi(x, y)$ is a subpath of $\pi(w, z)$.
\end{definition}

\begin{lemma} [Folklore]
For any graph $G$, there is a consistent tiebreaking function $\pi$.
\end{lemma}
\begin{proof} [Proof]
Modify all edge weights in $G$ by adding an independent uniform random variable drawn from the interval $[0, \eps]$, for some parameter $\eps > 0$.
With probability $1$ there will be no more ties between shortest path lengths; additionally, if we choose $\eps$ small enough then it is not possible for any previously non-shortest path to become a shortest path after this reweighting.
Letting $\pi$ select the unique shortest path in the reweighted graph, whenever $x, y \in \pi(w, z)$ as in the definition of consistency, it must be that $\pi(w, z)$ includes the unique shortest subpath $\pi(x, y)$ between $x$ and $y$.
\end{proof}

Let $\pi$ be any consistent tiebreaking function, and our preserver is the subgraph $H := \pi(P)$.
\begin{definition} [Branching Triple]
A {\em branching triple} is a set of three distinct directed edges $\{e_1 = (u_1, v), e_2 = (u_2, v), e_3 = (u_3, v)\}$ with the same head node.
\end{definition}

\begin{lemma} \label{lem:brtrip} $H$ has at most $\binom{p}{3}$ branching triples.
\end{lemma}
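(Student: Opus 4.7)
The plan is to inject branching triples in $H$ into $3$-element subsets of $P$, which immediately gives the bound $\binom{p}{3}$. For each edge $e \in H$ fix once and for all a canonical pair $\sigma(e) \in P$ with $e \in \pi(\sigma(e))$; such a pair exists because $H = \pi(P)$. Send a branching triple $T = \{e_1, e_2, e_3\}$ with common head $v$ to $f(T) = \{\sigma(e_1), \sigma(e_2), \sigma(e_3)\} \subseteq P$.

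First I would verify that the image really is a $3$-element set. If $\sigma(e_i) = \sigma(e_j) = (s,t)$ for some $i \neq j$, then $e_i$ and $e_j$ both lie on the shortest path $\pi(s,t)$; since this path is simple it enters $v$ along a unique edge, forcing $e_i = e_j$ and contradicting the branching triple condition.

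The main (and only substantive) step is injectivity of $f$. Suppose $f(T) = f(T')$ for branching triples $T$ at $v$ and $T'$ at $v'$; after relabeling one may assume $\sigma(e_i) = \sigma(e_i') = P_i$ for each $i$, so $e_i, e_i' \in \pi(P_i)$. If $v = v'$, the uniqueness argument above yields $e_i = e_i'$ and hence $T = T'$. Otherwise $v \neq v'$, and each $\pi(P_i)$ contains both $v$ and $v'$; by pigeonhole, two of the three paths---say $\pi(P_1)$ and $\pi(P_2)$---traverse $v$ and $v'$ in the same order, WLOG $v$ before $v'$. Consistency of $\pi$ then forces the subpath of each of $\pi(P_1)$ and $\pi(P_2)$ from $v$ to $v'$ to coincide with $\pi(v, v')$, so the final edge of this common subpath is simultaneously $e_1'$ and $e_2'$, contradicting the distinctness of edges in the branching triple $T'$. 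I expect the pigeonhole-plus-consistency combination in the case $v \neq v'$ to be the only delicate point; everything else is bookkeeping, and once injectivity is in hand the lemma is immediate.
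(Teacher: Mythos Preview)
Your proof is correct and essentially identical to the paper's: both assign each edge an owning pair in $P$, map a branching triple to its set of three owners, and derive injectivity via the same pigeonhole-plus-consistency argument (two of the three paths visit $v$ and $v'$ in the same order, so consistency forces a repeated edge entering $v'$). You are slightly more explicit than the paper in verifying that the image is genuinely a $3$-element set and in separating out the $v=v'$ case, but these are minor cosmetic differences.
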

\begin{proof}
Suppose for contradiction that there are more than $\binom{p}{3}$ branching triples in $H$, and so by the pigeonhole principle there exist two branching triples
$$t = \{e_1 = (u_1, v), e_2 = (u_2, v), e_3 = (u_3, v)\} \text{  and  } t' = \{e'_1 = (u'_1, v'), e'_2 = (u'_2, v'), e'_3 = (u'_3, v')\}$$
and three pairs $p_1, p_2, p_3 \in P$ such that $e_i, e'_i \in \pi(p_i)$ for each $i \in \{1, 2, 3\}$.
Assume without loss of generality that the edges $e_1, e_2$ precede $e'_1, e'_2$ in $\pi(p_1), \pi(p_2)$ respectively (otherwise, exchange the roles of the indices $\{1, 2, 3\}$ and perhaps also the roles of $t, t'$ so that this holds).
It follows that $v, v' \in \pi(p_1) \cap \pi(p_2)$, with $v$ preceding $v'$ in both $\pi(p_1), \pi(p_2)$.
Since $\pi$ is consistent, this means that $\pi(v, v')$ is a subpath of both $\pi(p_1)$ and $\pi(p_2)$.
Therefore $\pi(p_1)$ and $\pi(p_2)$ use the same edge entering $v'$, and so $e'_1 = e'_2$.
This contradicts the distinctness of the three edges in $t'$.
\end{proof}

\begin{lemma}
A graph with $O(n)$ branching triples has $O(n)$ edges.
\end{lemma}
\begin{proof}
Consider adding $O(n)$ edges one by one to an initially empty graph.
The first and second edge entering any given node do not create any new branching triples.
Each subsequent edge creates at least one new branching triple.
Therefore, the number of edges in the final graph is at most $2n$ more than the number of branching triples.
\end{proof}
From these two lemmas, $H$ has $O(n)$ edges whenever $p = O(n^{1/3})$.
As discussed in the introduction, this is equivalent to the statement that $p$ demand pairs have a distance preserver on $O(n + n^{2/3}p)$ edges, since one can partition a larger set of demand pairs into parts of size $O(n^{1/3})$ each, build a distance preserver on $O(n)$ edges for each part separately, and then union these together.
So this proves Theorem \ref{thm:dwintro}.

\section{$O(p)$-Sized Preservers for Undirected Unweighted Graphs}

We now prove Theorem \ref{thm:introuu}.
All graphs in this section are undirected and unweighted.
For a tree $T$ rooted at a node $s$, it will be convenient to have the edges oriented away from $s$, so that referring to an edge $(x, y) \in T$ specifically means that $x$ is a parent of $y$, that is, $\dist_T(s, x) < \dist_T(s, y)$.
We may also then refer to $x$ as the \emph{tail} and $y$ as the \emph{head} of the edge $(x, y)$.

\begin{definition} [Branching and Parallel Edges]
In a tree $T$ rooted at a node $s$, a \emph{branching edge} is one whose tail node has out-degree $\ge 2$.
The set of branching edges in $T$ is written $\bee(T)$.
Two edges $(x, y), (x', y') \in T$ are \emph{parallel} if $\dist_T(s, x) = \dist_T(s, x')$, $\dist_T(s, y) = \dist_T(s, y')$, and neither of $(x, y), (x', y')$ is a branching edge.
\end{definition}

\begin{definition} [Lazy Tree]
In a graph $G = (V, E)$, a tree $T$ rooted at a node $s$ is \emph{lazy} if, for all pairs of parallel edges $(x, y), (x',y') \in T$, we have $(x, y'), (x', y) \notin E$.
\end{definition}

Informally, a lazy tree tries to delay the point at which two shortest paths in $T_s$ branch apart for as long as possible.
This intuition is made more precise in the following argument, which builds a lazy tree by iteratively modifying $T$ to delay branchings.

\begin{lemma} \label{lem:lazyexists}
For any graph $G = (V, E), s \in V,$ and set of demand pairs $P_s \subseteq \{s\} \times V$, there is a distance preserver of $P_s$ that is a lazy tree (with respect to root node $s$), whose leaves are exactly the endpoints of pairs in $P_s$.
\end{lemma}
\begin{proof}
Initialize $T_s$ to be any tree formed by overlaying consistent shortest paths for the pairs in $P_s$.
While $T_s$ is not yet lazy, by definition there are edges $(x, y), (x', y') \in T_s \setminus \bee(T_s)$ with $\dist_G(s, x) = \dist_G(s, x')$, $\dist_G(s, y) = \dist_G(s, y')$, and also $(x, y') \in E$.
We then modify $T_s$ in two steps:
\begin{itemize}
\item Add $(x, y')$ and remove $(x', y')$ (note this does not change $\dist_{T_s}(s, y')$), and
\item If one can remove any edges from the new tree $T_s$ without changing the fact that it contains a shortest path for each demand pair in $P_s$, then remove those edges.\footnote{We need to remove extra edges as we go, rather than removing them all in one shot at the end of the construction, to avoid a scenario where a branching edge turns into a non-branching edge because the edge it branches with is removed.  This could potentially break the laziness of the tree.}
\end{itemize}

Let us argue that this iterative modification of $T_s$ must eventually halt, at which point $T_s$ is lazy, and (by the second point) only endpoints of pairs in $P_s$ can be leaves.
If at least one edge is removed from $T_s$ in the second step, then the overall number of edges in $T_s$ decreases in this round.
Otherwise, if no edge is removed from $T_s$ in the second step, then the overall number of edges in $T_s$ remains the same but the number of \emph{non-branching} edges in $T_s \setminus \bee(T_s)$ decreases by $2$, since $(x', y')$ is removed and $(x, y), (x, y')$ are now branching edges.
So the first case happens $O(n)$ times in total, and the second case happens at most $O(n)$ times between instances of the first case.
\end{proof}

Now, given $G=(V, E)$ and demand pairs $P$, we build a distance preserver $H = (V, E_H)$ by partitioning the demand pairs into $n$ subsets $\{P_s\}$, with each $P_s \subseteq \{s\} \times V$ for some $s \in V$.
Using Lemma \ref{lem:lazyexists}, we build a lazy tree $T_s$ to preserve distances for each $P_s$, and finally we union these trees together to get $H$.
We now bound the number of edges in $H$.

\begin{lemma} \label{lem:matchpartition}
We can discard a constant fraction of the edges in $E_H$, and then partition the remaining edges in $E_H \setminus \bigcup \limits_{s \in V} \bee(T_s)$
into $3n$ induced matchings.
\end{lemma}
\begin{proof}
For each node $v \in V$, flip a coin to assign $v$ the label \emph{close} or \emph{far}.
Then, let us say that an edge $(x, y) \in T_s, \dist_{T_s}(s, x) < \dist_{T_s}(s, y)$, is \emph{surviving} in $T_s$ if $x$ was assigned the label \emph{close} and $y$ was assigned the label \emph{far}.
Discard any edges from $E_H$ that are not surviving in any tree $T_s$.
Note that each edge survives with probability at least $1/4$, so in expectation we discard only a constant fraction of the edges in this way.

For each tree $T_s$, partition the surviving edges of $T_s \setminus \bee(T_s)$ into three sets $M_s^{0}, M_s^{1}, M_s^{2}$.
Each edge $(x, y) \in T_s \setminus \bee(T_s)$ is assigned to the set $M_s^i$ where $i := \dist_G(s, x)$ mod $3$.
We now argue that each set $M_s^i$ is an induced matching in $H$.
Let $(x, y), (x', y') \in M_s^{i}$ with $\dist_{T_s}(s, x) < \dist_{T_s}(s, y)$ and $\dist_{T_s}(s, x') < \dist_{T_s}(s, y')$.
There are two cases:
\begin{itemize}
\item Suppose that $\dist_G(s, x) \ne \dist_G(s, x')$.
Since we have $\dist_G(s, x) = \dist_G(s, x')$ mod $3$, without loss of generality we can write $\dist_G(s, x) \ge 3 + \dist_G(s, x')$.
By the triangle inequality, it follows that $(x, x'), (x, y'), (x', y) \notin E$, since any one of these edges would imply an $x' \leadsto x$ path in $G$ of length $1$ or $2$.
We also have $(y, y') \notin E$ by the triangle inequality, since $\dist_G(s, y) \ge 3 + \dist_G(s, y')$.

\item Suppose instead that $\dist_G(s, x) = \dist_G(s, x')$, and thus $(x, y), (x', y')$ are parallel.
It follows from the definition of lazy trees that $(x, y'), (x', y) \notin E$.
Additionally, since $(x, y), (x', y')$ both survived in $T_s$, it must be that $x, x'$ are both labeled \emph{close} and $y, y'$ are both labeled \emph{far}.
Thus $(x, x'), (y, y') \notin E_H$ since the endpoints of these edges have the same labels, so they would be discarded in the first step.
\end{itemize}
Thus each $M_s^i$ is an induced matching, and there are $3n$ such sets in total over all $n$ possible choices of $s$.
\end{proof}

To prove Theorem \ref{thm:introuu}, we now count
$$\left| \bigcup \limits_{s \in V} \bee(T_s) \right| \le \sum \limits_{s \in V} \left| \bee(T_s) \right| = \sum \limits_{s \in V} O(|P_s|) = O(p).$$

Then, by Lemma \ref{lem:matchpartition}, after discarding half the edges in $H$ the remaining edges in $E_H \setminus \bigcup_{s \in V} \bee(T_s)$ can be partitioned into $3n$ induced matchings.
By discarding the $2n$ induced matchings of smallest cardinality, we lose at most an additional $\frac{2}{3}$ fraction of the edges and get a graph whose edges can be partitioned into $n$ induced matchings.
It follows that
$$\left| E_H \setminus \bigcup_{s \in V} \bee(T_s) \right| = O\left(\frac{n^2}{\rs(n)} \right).$$
Putting these together, we have
$$\left|E_H \right| = \left| \bigcup \limits_{s \in V} \bee(T_s) \right| + \left| E_H \setminus \bigcup_{s \in V} \bee(T_s) \right| = O(p) + O\left(\frac{n^2}{\rs(n)} \right),$$
which completes the proof of Theorem \ref{thm:introuu}.

\section{Lower Bounds for Subset Preservers}

We now prove Theorems \ref{thm:iswlb} and \ref{thm:isulb}.
All graphs in this section are undirected (but we will specify weighted/unweighted in context).

\subsection{The Obstacle Product}

The \emph{obstacle product} is a framework for manipulating systems of paths in graphs, which has appeared repeatedly in prior work (\cite{AB17jacm, ABP17, HP18swat}, etc).
Here we will not present the obstacle product in its full generality, rather, we just describe the special case used in this paper.
Recall that a graph $G = (V, E)$ is \emph{$\ell$-layered} if one can partition its nodes $V = V_1 \cup \dots \cup V_{\ell}$ such that all edges go between two adjacent parts $V_i, V_{i+1}$.

\begin{definition} [Perfect Paths]
Let $G = (V_1 \cup \dots \cup V_{\ell}, E, w)$ be a layered graph. A set of paths $\Pi$ is \emph{perfect} if each $\pi \in \Pi$ is the unique shortest path between its endpoints, each $\pi$ starts in $V_1$ and ends in $V_{\ell}$ with exactly one node in each layer, and each $e \in E$ is in exactly one $\pi \in \Pi$.
\end{definition}

The \emph{obstacle product} is the following operation.
Let $G, H$ be layered graphs and let $\Pi_G, \Pi_H$ be perfect sets of paths in $G, H$ respectively.
Suppose $G$ has exactly three layers $V_1, V_2, V_3$ (but $H$ could have more).
Additionally, suppose there is a parameter $q$ such that (1) each node in $V_2$ is in exactly $q$ paths in $\Pi_G$, and (2) $|\Pi_H| = q$.
The \emph{obstacle product} generates a new graph and set of paths as follows.
For each $v \in V_2$ perform the following steps:
\begin{itemize}
\item Delete $v$ from $G$ and add a new disjoint copy $H_v$ of $H$.

\item For each path $\pi \in \Pi_G$ that contained $v$, choose a unique path $\pi_H \in \Pi_H$, and replace the instance of $v \in \pi$ with the entire path $\pi_H$ through $H_v$.
(Notice that the unique correspondence between $\Pi_H$ and paths containing $v$ is possible since we have assumed that $v$ is in $q = |\Pi_H|$ paths in $\Pi_G$.)
To ensure that the new replaced path is still contained in the modified graph, we add two new edges:
\begin{itemize}
\item Letting $u$ be the node preceding $v$ in $\pi$, we add an edge connecting $u$ to the first node in $\pi_H$, whose weight is $w_G(u, v)$.
\item Letting $w$ be the node following $v$ in $\pi$, we add an edge connecting $w$ to the last node in $\pi_H$, whose weight is $w_G(v, w)$.
\end{itemize}
\end{itemize}

\begin{figure}[h]
\begin{center}
\begin{tikzpicture}
\draw [ultra thick] (-1.5, 1) -- (1.5, -1);
\draw [ultra thick] (-1.5, 0) -- (1.5, 0);
\draw [ultra thick] (-1.5, -1) -- (1.5, 1);
\draw [fill=black] (0, 0) circle [radius=0.15];
\node at (0, -0.4) {$v$};
\draw [ultra thick] (0, 0) ellipse (0.5 and 2);
\draw [ultra thick] (-1.5, 0) ellipse (0.5 and 2);
\draw [ultra thick] (1.5, 0) ellipse (0.5 and 2);

\node at (0, -2.5) {$V_2$};
\draw [ultra thick, ->] (3, 0) -- (4.5, 0);
\node at (3.75, -0.5) {o.p.};

\begin{scope}[shift={(8, 0)}]
\draw [ultra thick] (0, 0) circle [radius=1];
\node at (0, -1.5) {$H_v$};
\draw [ultra thick] (0, 0) ellipse (1 and 2);
\draw [ultra thick] (2, 0) ellipse (0.5 and 2);
\draw [ultra thick] (-2, 0) ellipse (0.5 and 2);

\draw [ultra thick] (-2, 1) -- (-0.6, 0.6);
\draw [ultra thick] (-2, -1) -- (-0.6, -0.6);
\draw [ultra thick] (-2, 0) -- (-0.8, 0);
\draw [ultra thick] (2, -1) -- (0.6, -0.6);
\draw [ultra thick] (2, 1) -- (0.6, 0.6);
\draw [ultra thick] (2, 0) -- (0.8, 0);
\draw [ultra thick, dotted] (-0.6, 0.6) -- (0.6, -0.6);
\draw [ultra thick, dotted] (-0.6, -0.6) -- (0.6, 0.6);
\draw [ultra thick, dotted] (-0.8, 0) -- (0.8, 0);
\node [align=center] at (0, -2.75) {$V_2$\\ (expanded to several layers)};

\draw [fill=black] (-0.8, 0) circle [radius=0.1];
\draw [fill=black] (0.8, 0) circle [radius=0.1];
\draw [fill=black] (-0.6, -0.6) circle [radius=0.1];
\draw [fill=black] (0.6, 0.6) circle [radius=0.1];
\draw [fill=black] (-0.6, 0.6) circle [radius=0.1];
\draw [fill=black] (0.6, -0.6) circle [radius=0.1];
\node at (0, -0.7) {$\Pi_H$};

\end{scope}
\end{tikzpicture}
\caption{\label{fig:op} In the obstacle product, we replace each $v \in V_i$ with a copy of $H$ as illustrated here.}
\end{center}
\end{figure}

The point of the obstacle product is that it preserves perfect paths, in the following sense.
We will only sketch the proof here, as it is standard in the area and fairly straightforward.

\begin{lemma} [e.g., \cite{AB17jacm, ABP17, HP18swat}] \label{lem:op}
There exists $\eps > 0$ such that, letting $\eps_H$ be the graph $H$ with edge weights rescaled by $\eps$, the output graph $G'$ and set of paths $\Pi_{G'}$ of the above obstacle product on $G, \eps H$ is such that $\Pi_{G'}$ is perfect in $G'$.

Moreover, if $G, H$ are unweighted, then we can choose $\eps = 1$ (we do not need to rescale $H$).
\end{lemma}
\begin{proof} [Proof Sketch]
It follows directly from the construction that $G'$ is a layered graph, that the paths in $\Pi_{G'}$ go from the first to the last layer with one node in each, and that each edge in $G'$ is in exactly one $\pi' \in \Pi_{G'}$.
The remaining part is to prove that each $\pi' \in \Pi_{G'}$ is the unique shortest path between its endpoints.
In the unweighted setting, we can use uniqueness of the original shortest path $\pi \in \Pi_G$ and uniqueness of the replaced shortest subpath $\pi_H$ to argue that the corresponding path $\pi' \in \Pi_{G'}$ is the only path between its endpoints that uses exactly one node in each layer of $G'$, and hence it is a unique shortest path.
In the weighted settting, we notice that the length of the original shortest path $\pi \in \Pi_G$ increases by at most $\eps \cdot \diam(H)$ over the obstacle product; thus, by choosing $\eps$ small enough, we can guarantee that $\pi$ remains shorter than any second-shortest path between the same endpoints, and so it is still a unique shortest path.
\end{proof}

\subsection{Weighted Lower Bound}

Our lower bound constructions are obtained by simply plugging the right choices of $G, H$ into the obstacle product.
We start with the following useful graphs, which are a light modification of constructions that appear in prior work:

\begin{theorem} [\cite{ST83, CE06}] \label{thm:stwlb}
For any integers $n, \ell \le n$, and $x \le n/\ell$, there is an $\ell$-layered weighted graph $G$ with $n$ nodes in each layer and a set of perfect paths $\Pi$ such that each node is in exactly $x$ paths in $\Pi$.
\end{theorem}
\begin{proof}
Let $[n] := \{0, \dots, n-1\}$.
The nodes of $G$ are $[\ell] \times [n]$.
For all integers $i \in [n], a \in [x]$, we include in $\Pi$ the path
$$\pi = \left( (0, i), (1, i+a), \dots, (\ell-1, i+(\ell-1)a) \right)$$
where addition in the second coordinate is performed mod $n$.
The edges of $G$ are exactly the edges that appear in any such path, and the weight of an edge from a path $\pi$ as above is $\sqrt{1 + a^2}$.

It is immediate that $G$ is $\ell$-layered and that each edge is in exactly one path in $\Pi$, so we will prove here that the paths $\pi \in \Pi$ are unique shortest paths.
Notice that each $\pi$ can be considered as a line segment in $\rr^2$, and due to the setting of edge weights the length of $\pi$ is exactly the Euclidean length of the line segment, up to the detail that the space wraps around in one dimension since we use modular arithmetic.
However, since we add mod $n$ and the total gain in the second coordinate is $(\ell-1) a \le (\ell-1) x < n$, it is impossible to \emph{fully} wrap around, so this detail may be ignored.
Thus, each $\pi \in \Pi$ is the unique shortest path between its endpoints since line segments are unique shortest paths in $\rr^2$.
\end{proof}

\begin{proof} [Proof of Theorem \ref{thm:iswlb}]
Using Theorem \ref{thm:stwlb} twice, we have the following graphs:
\begin{itemize}
\item Let $G = (V_1 \cup V_2 \cup V_3, E, w)$ be an $\ell=3$ layered graph on $\sigma$ nodes per layer, let $\Pi_G$ be a set of $|\Pi_G| = \Theta(\sigma^2)$ perfect paths, and suppose each node is in exactly $x_G = \Theta(\sigma)$ paths in $\Pi_G$.

\item Let $H$ be a graph on $\ell_H = \Theta( n^{2/3} / \sigma )$ layers and $n_H = \Theta(n/(\sigma \ell_H))$ nodes per layer, and let $\Pi_H$ be a set of perfect paths where each node in $H$ is in exactly $x_H$ paths in $\Pi_H$.
Its size is then
$$|\Pi_H| = n_H x_H = \Theta\left( \frac{n_H^2}{\ell_H} \right) = \Theta\left( \frac{n^2}{\sigma^2 \ell_H^3} \right) = \Theta(\sigma);$$
in particular, we may thus ensure that $x_G = |\Pi_H|$.
\end{itemize}

Use the obstacle product to replace each node in the middle layer $V_2$ of $G$ with a copy of $\eps H$ (for some small enough $\eps > 0$, as in Lemma \ref{lem:op}), giving a new graph $G'$ and perfect paths $\Pi_{G'}$.
The number of nodes in $G'$ is $\sigma + \sigma n_H \ell_H + \sigma = \Theta(n)$.
The total number of edges in the paths in $\Pi_{G'}$ is
$|\Pi_G| (\ell_H + 1) = \Theta\left(\sigma n^{2/3} \right)$.
Since $\Pi_{G'}$ is perfect and all its paths go from $V_1$ to $V_3$, any subset distance preserver of $V_1 \cup V_3$ must keep all edges in $\Pi_{G'}$.
By construction there are $\Theta(\sigma)$ nodes in $V_1 \cup V_3$, proving the theorem.
\end{proof}

\subsection{Unweighted Lower Bound}

The proof of Theorem \ref{thm:isulb} is essentially the same as that of Theorem \ref{thm:iswlb}, except that the parameters of the relevant graphs are different.
We again start with a light modification of graphs from prior work:
\begin{theorem} [\cite{Behrend46, Alon02, CE06, AB17jacm, HP18swat}, etc] \label{thm:stulb}
For any integers $n$, $d \ge 2, \ell \le n^{1/d}$, and
$$x = O\left(  n^{\frac{d-1}{d+1}} \ell^{-d \frac{d-1}{d+1}}\right),$$
there is an $\ell$-layered unweighted graph $G$ with $n$ nodes in each layer and a set of perfect paths $\Pi$ such that each node is in exactly $x$ paths in $\Pi$.
\end{theorem}
\begin{proof}
For some parameter $r$, let $V_r$ denote a set of vectors in $\mathbb{N}^d$ such that each $v \in V_r$ has nonnegative integer coordinates and Euclidean length $\le r$, and the vectors $V_r$ form a strictly convex set (i.e., there is no way to write any $v \in V_r$ as a linear combination of the rest with nonnegative coefficients that sum to $\le 1$).
It is proved in \cite{BL98} that one can construct a set with these properties of size up to
$$|V_r| = O\left(r^{d \frac{d-1}{d+1}} \right).$$
We use $V_r$ to build a graph as follows.
Again we will use the notation convention $[n] := \{0, \dots, n-1\}$.
The nodes of $G$ are $[\ell] \times [\ell r]^d$.
For every tuple $t \in [\ell r]^d$ and $v \in V_r$, we include in $\Pi$ the path
$$\pi = \left((0, t), (1, t+v), \dots, (\ell-1, t + (\ell-1)v)\right)$$
where addition in the second coordinate, which holds elements of $[\ell r]^d$, is performed coordinatewise mod $\ell r$.
Setting $r = n^{1/d} / \ell$, we have $n$ nodes per layer and $$|V_r| = O\left(r^{d \frac{d-1}{d+1}}\right) = O\left(n^{\frac{d-1}{d+1}} \ell^{-d \frac{d-1}{d+1}} \right).$$
Due to the modular addition, a symmetry argument over the nodes in each layer implies that each node is contained in exactly $x=|V_r|$ paths in $\Pi$.
Additionally, one can argue that there is no alternate path of length $\ell-1$ between the endpoints $(0, t) \leadsto (\ell - 1, t + (\ell-1)v)$ of $\pi$, as any such path would let us express $v$ as a convex combination of other vectors from $V_r$.
%
\end{proof}

\begin{proof} [Proof of Theorem \ref{thm:isulb}]
For ease of notation, we will write $\Theta^*(\cdot)$ to hide factors of the form $2^{\Theta(\sqrt{\log n})}$.
Using Theorem \ref{thm:stulb} twice, we have the following graphs:
\begin{itemize}
\item Let $G = (V_1 \cup V_2 \cup V_3, E)$ be an $\ell=3$ layered graph on $\sigma$ nodes per layer, let $\Pi_G$ be a set of perfect paths where each node is in exactly
$$x_G = \Theta\left( \sigma^{\frac{d-1}{d+1}} 3^{-d \frac{d-1}{d+1}} \right)$$
paths in $\Pi_G$, and hence
$$|\Pi_G| = |V_1| x_G = \Theta\left( \sigma^{1 + \frac{d-1}{d+1}} 3^{-d \frac{d-1}{d+1}} \right).$$
We may assume $\sigma = \text{poly}(n)$, since the lower bound of Theorem \ref{thm:isulb} is trivial (i.e., it is $O(n)$) in the parameter range $\sigma = O(n^{1/3})$.
We then set the parameter $d := \Theta(\sqrt{\log n})$ for these graphs, giving
\begin{align*}
x_G &= \Theta^*\left( \sigma^{1 - 1/\Theta\left(\sqrt{\log n}\right)} \right)\\
&= \Theta^*\left( \sigma 2^{-\Theta\left(\sqrt{\log \sigma}\right)} \right)\\
&= \Theta^*\left(\sigma \right),
\end{align*}
and so $|\Pi_G| = \Theta^*\left( \sigma^2 \right)$.

\item
Let $H$ be a graph on
$$\ell_H := \Theta^* \left(n^{\frac{2}{d+1}} \sigma^{-\frac{3d+1}{d(d+1)}} \right)$$
layers and $n_H := n / (\ell_H \sigma)$ nodes per layer, and let $\Pi_H$ be a set of perfect paths where each node in $H$ is in exactly $x_H$ paths in $\Pi_H$.
We have:
\begin{align*}
x_H &:= \Theta^*\left(n_H^{\frac{d-1}{d+1}} \ell_H^{-d \frac{d-1}{d+1}} \right)\\
&= \Theta^*\left( \left( \frac{n}{\ell_H \sigma} \right)^{\frac{d-1}{d+1}} \ell_H^{-d \frac{d-1}{d+1}} \right)\\
&= \Theta^*\left( \left(\frac{n}{\sigma}\right)^{\frac{d-1}{d+1}} \ell_H^{-d+1}  \right)\\
&= \Theta^*\left( \left(\frac{n}{\sigma}\right)^{\frac{d-1}{d+1}}  n^{-\frac{2d-2}{d+1}} \sigma^{\frac{(3d+1)(d-1)}{d(d+1)}} \right)\\
&= \Theta^*\left( n^{-\frac{d-1}{d+1}} \sigma^{\frac{2d^2 - d - 1}{d(d+1)}} \right).
\end{align*}
Thus $\Pi_H$ has size
\begin{align*}
|\Pi_H| = n_H x_H &= \Theta^*\left( \left(\frac{n}{\ell_H \sigma}\right) \left(n^{-\frac{d-1}{d+1}} \sigma^{\frac{2d^2 - d - 1}{d(d+1)}}\right) \right)\\
&= \Theta^*\left(\ell_H^{-1} n^{\frac{2}{d+1}} \sigma^{1 - \frac{3d+1}{d(d+1)}} \right)\\
&= \Theta^*\left(\sigma\right).
\end{align*}
\end{itemize}
Thus we may have $\Pi_H = x_G$, so we can use the obstacle product to replace each node in the middle layer $V_2$ of $G$ with a copy of $H$, giving a new graph $G'$ and perfect paths $\Pi_{G'}$.
The number of nodes in $G'$ is $\sigma + \sigma n_H \ell_H + \sigma = \Theta(n)$.
The total number of edges in the paths in $\Pi_{G'}$ is
$$|\Pi_G|(\ell_H + 1) = \Theta^*\left( \sigma^2 \left( n^{\frac{2}{d+1}} \sigma^{-\frac{3d+1}{d(d+1)}}  \right) \right) = \Theta^*\left( n^{\frac{2}{d+1}} \sigma^{\frac{(2d+1)(d-1)}{d(d+1)}}\right).$$
Since $\Pi_{G'}$ is perfect and all its paths go from $V_1$ to $V_3$, any subset distance preserver of $V_1 \cup V_3$ must keep all edges in $\Pi_{G'}$.
By construction there are $\Theta(\sigma)$ nodes in $V_1 \cup V_3$, proving the theorem.
\end{proof}

\section*{Acknowledgements}

I am grateful to Amir Abboud, Atish Agarwala, Noga Alon, Virginia Vassilevska Williams, and several anonymous reviewers for useful technical discussions and helpful comments and corrections that have improved the current presentation of this work.

\bibliography{references}
	\bibliographystyle{plain}

\end{document}